\pdfoutput=1

\documentclass[11pt]{article}

\usepackage[preprint]{acl/acl}

\usepackage{mathtools}
\usepackage{enumerate}
\usepackage{wrapfig}

\usepackage{caption}
\usepackage{amsthm}
\usepackage{multirow}

\newtheorem{theorem}{Theorem}

\usepackage{subcaption}
\usepackage{graphicx}

\usepackage{amssymb}
\usepackage[mathscr]{euscript}

\usepackage{natbib}

\usepackage{amsfonts,amsmath,amssymb}
\usepackage{bbm}
\usepackage{xspace}

\usepackage{multicol}
\usepackage{enumitem}

\usepackage{times}
\usepackage{latexsym}
\usepackage{booktabs}   
\usepackage{graphicx}   
\usepackage[ruled,vlined]{algorithm2e}

\usepackage[T1]{fontenc}

\usepackage[utf8]{inputenc}

\usepackage{microtype}

\usepackage{inconsolata}

\usepackage{graphicx}

\title{When prompt perturbations break your A/B test: \\ A valid statistical test for generative surveying}

\author{Hayden Helm \\
  Helivan \\ \texttt{hayden@helivan.io}
  \And Carey E. Priebe \\
  Johns Hopkins University}

\begin{document}
\maketitle

\begin{abstract}
Generative surveying -- where collections of LLM-based personas provide feedback on messages -- has emerged as a cheap and scalable alternative to traditional market research. 
However, LLMs are sensitive to small variations in prompt design and conclusions drawn from generative surveys may depend on arbitrary phrasing choices. 
Controlling for this sensitivity requires including semantically equivalent perturbations in the analysis.
In this paper, we show that standard hypothesis tests, including the sign test and Wilcoxon signed-rank test, are invalid under a statistical model for generative surveying that includes realistic perturbation structure. 
We propose a permutation test that is valid under this model and formally characterize the conditions under which standard tests fail.
Applying our framework to a simple generative surveying problem, we estimate relevant parameters, characterize the power of the permutation test under realistic conditions, and provide practical guidance on budget allocation across personas, perturbations, and replicates. 
Finally, we show that both the magnitude and direction of the estimated effect are sensitive to the choice of model, even within the same model family.
\end{abstract}

\section{Introduction}
\label{sec:introduction}

Generative surveying -- where a collection of LLM-based personas are queried to 
simulate population-level feedback on a product, message, or design -- has emerged 
as a cheap and scalable alternative to traditional market research 
\citep{argyle2023out, brand2023using, horton2023large}. 
The approach is attractive for the same reasons that LLMs are attractive as proxies 
for human subjects more broadly \citep{aher2023using, dillion2023can}: 
they are fast, cheap, and increasingly capable of producing human-like responses 
across a wide range of tasks \citep{helm2023statisticalturingtestgenerative,achiam2023gpt, dubey2024llama}. 
Given these advantages, generative surveying has been explored for preference 
elicitation \citep{brand2023using, hamalainen2023evaluating}, policy design simulation and prediction
\citep{park2024generative, helm2025digital}, and A/B testing of messages and 
designs \citep{brand2023using, yeykelis2024using}.

A well-known limitation of LLMs is their sensitivity to small variations in prompt 
wording \citep{lu2022fantastically, zhao2021calibrate, sclar2023quantifying, 
mizrahi2024state, ness2024medfuzz}. 
In the context of generative surveying, this sensitivity means that measured 
preference differences may depend on arbitrary phrasing choices rather than genuinely different preferences. 
Principled analysis of generative surveying must therefore control for this sensitivity by 
considering semantically equivalent perturbations. 

While this design is natural and well-motivated, it is also incompatible with the standard hypothesis tests used to 
analyze such data.
\begin{figure*}[t!]
    \centering
    \begin{subfigure}{0.49\linewidth}
        \centering
\includegraphics[width=\textwidth]{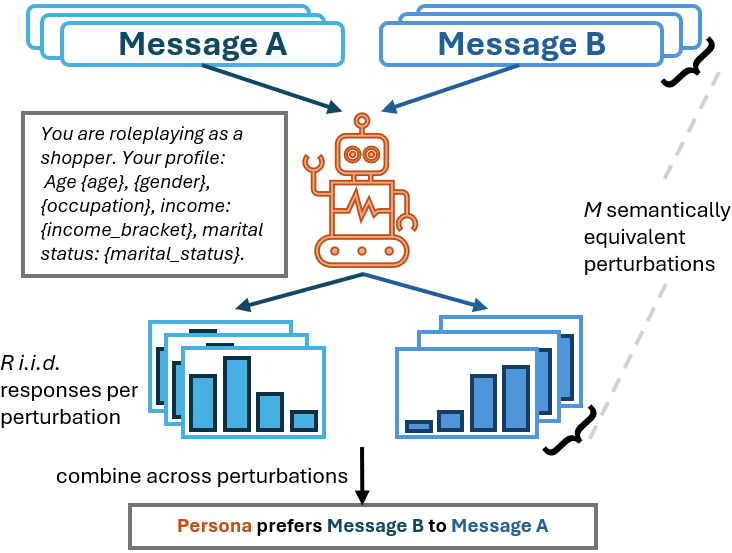}
    \end{subfigure}
    \begin{subfigure}{0.49\linewidth}
        \centering  {\includegraphics[width=\textwidth]{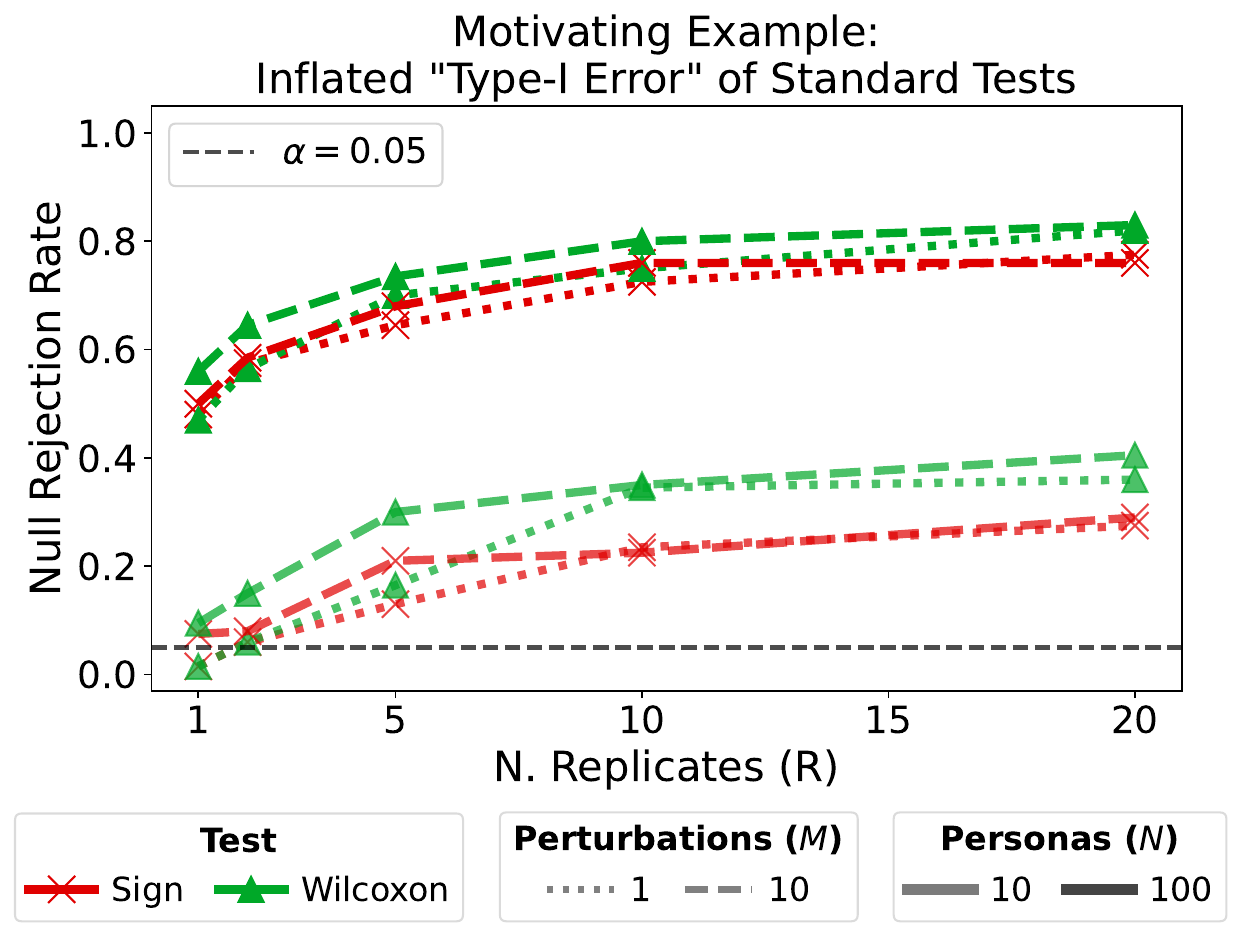}}
    \end{subfigure}
    \caption{(Left) Illustration of persona-based generative message testing for a single persona.
    The illustrated process is repeated across $ N $ personas and the preferences are aggregated for population-level analysis and statistical testing. 
    (Right) Standard paired tests have inflated Type-I error when persona-based message testing accounts for semantically equivalent message perturbations.}
    \label{fig:motivating-example}
\end{figure*}
Standard paired hypothesis tests (e.g., the sign test 
\citep{6e59f45d-d7d2-3290-89d3-655551349424, dixon1946statistical} and the Wilcoxon signed-rank test 
\citep{wilcoxon1945individual}) assume that cross-persona information is independent. 
Shared perturbations of the message, however, induce correlated preference shifts 
across personas which violates this assumption. 
As shown in Figure~\ref{fig:motivating-example} (right panel), the practical consequences are severe: null rejection rates exceed $0.8$ at a nominal level of $\alpha = 0.05$, with inflation growing with number of personas ($N$), number of perturbations ($M$), and and number of replicates per perturbation ($R$). 
Despite this, the effect of perturbations on standard paired tests in generative 
surveying has gone largely unexamined.
We make two primary contributions.
\begin{enumerate}[itemsep=0pt, topsep=0pt, parsep=2pt]
    \item \textbf{Statistical:} 
    We introduce a statistical model for binary generative surveys with an explicit cross-persona correlation parameter $\rho$ and propose a permutation test that is valid for all $\rho $. We formally characterize properties of the permutation test and the conditions under which standard paired tests fail. 
    \item \textbf{Empirical:} We estimate model parameters from a generative survey instantiated on ten models from the \texttt{Mistral-3} \citep{mistralai2025large3} and \texttt{Qwen-3} \citep{yang2025qwen3} families. 
    We show that $ \rho > 0 $ for the majority of models and that the proposed permutation test  does not exhibit the Type-I error inflation observed in standard tests.
    We further demonstrate that effect size and direction can vary across 
    models, even within the same family.
\end{enumerate}

\subsection{Motivating Example}
\label{sec:motivating-example}

We introduce an example that we return to throughout the paper.
Given that model use is often restricted to API-access only, we assume only black-box access throughout.

Consider a generative survey designed to determine whether a simulated population of consumer personas is more likely to purchase sneakers (Message $A$) or boots (Message $B$) in the black-box setting (see Figure \ref{fig:motivating-example}, left panel). 
We construct personas with demographic attributes (age, gender, 
occupation, income bracket, marital status; see Appendix \ref{app:personas} for examples) generated via \texttt{mistral-small-1225}. 
Each persona is defined via a system prompt that encodes a demographic profile; for example:

\small
\begin{quote}
\textit{You are roleplaying as a shopper. Your profile: Age 59, female, 
Professor, income: high, marital status: single. Answer ONLY `Yes' or `No'. 
Do not explain.}
\end{quote}

\normalsize
We use simple purchase-intent queries to mitigate potential confounds in message quality or length (e.g., ``I'd like to buy a pair of sneakers").
To control for LLM sensitivity to prompt wording, each message is represented 
by $M$ semantically equivalent perturbations -- paraphrases that vary 
vocabulary, formality, and sentence structure while preserving the conceptual 
and logical structure.
(e.g., ``I'm looking to purchase some new trainers''; see Appendix \ref{app:sneaker-perturbations} for full collection of perturbations for both messages).
In practice, the surveyor must determine what ``semantically equivalent perturbation" means.

Each persona is presented with each perturbation via 
the prompt

\small
\begin{quote}
\textit{A customer said: ``\{perturbation\}'' Based on this, would you personally 
be interested in buying the product they are describing?}
\end{quote}

\normalsize
\noindent and is queried with temperature $1.0$. 
In the black-box setting, the query-conditioned response distributions are not directly observable, so multiple replicates per (persona, perturbation) pair are necessary to estimate it.
We condition on responses that start with ``Yes" and ``No".
The process is repeated across all personas and their preferences are aggregated for population-level statistical testing.

To assess the appropriateness of standard paired tests in this setting, we construct a null condition by splitting 
$ 50 $ sneaker perturbations into two halves.
Since both sets of messages describe the same product, any rejection of the null hypothesis of no preference constitutes a Type-I error. 
Figure~\ref{fig:motivating-example} (right) shows the null rejection rates of 
the sign test \citep{dixon1946statistical} and Wilcoxon signed-rank test 
\citep{wilcoxon1945individual} across a range of $N$, $M$, and $R$ values. 
Both tests exhibit severely inflated Type-I error, with rejection rates exceeding 
$0.8$ at $\alpha = 0.05$.
The inflation grows with $M$ and $R$ and motivates the development of test that explicitly accounts for cross-persona correlation. 
We return to this example in \S\ref{sec:experiments}.

%

\section{Related Work}
\label{sec:related-work}
\begin{figure*}[t!]
    \centering
    \begin{subfigure}{0.4\linewidth}
        \centering
\includegraphics[width=\textwidth]{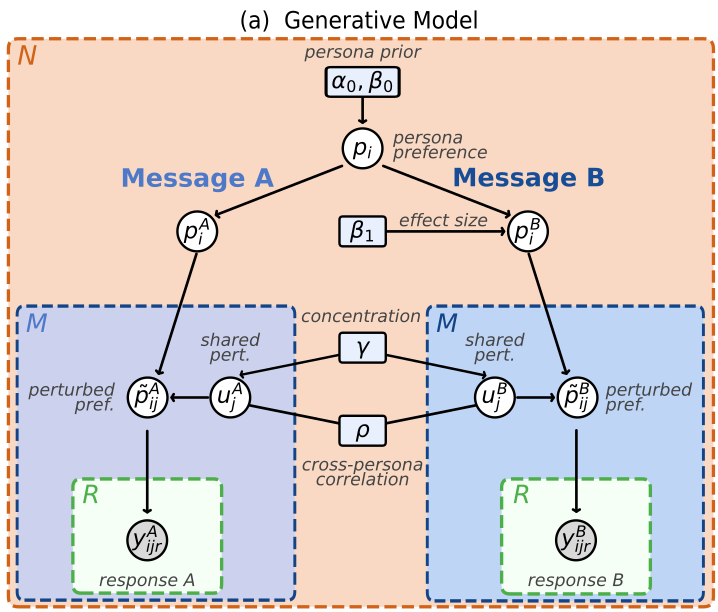}
    \end{subfigure}
    \begin{subfigure}{0.59\linewidth}
        \centering  {\includegraphics[width=\textwidth]{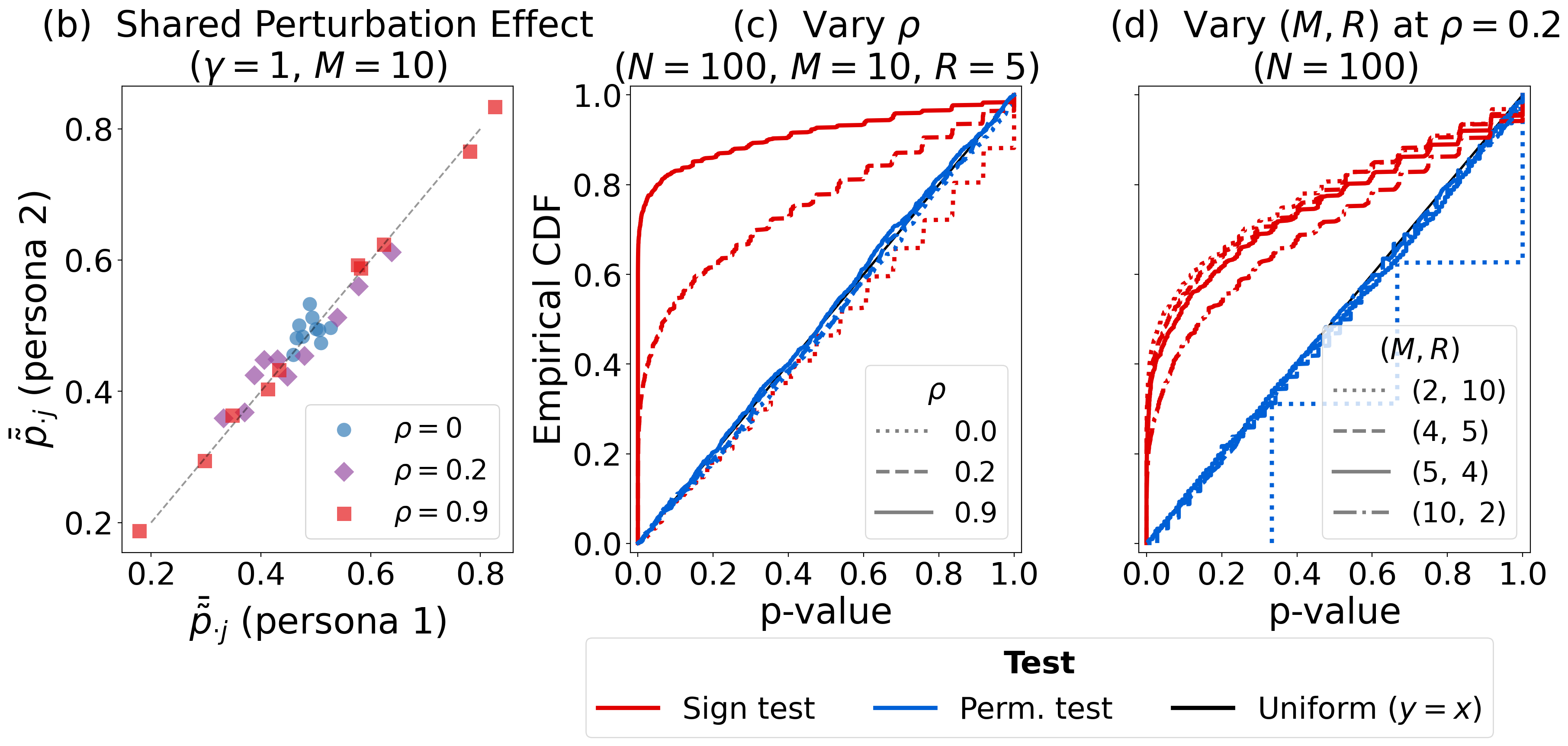}}
    \end{subfigure}
    \caption{(Left) Simple statistical model for persona-based generative surveying where outcomes are binary. 
    Light-grey boxes enclose model parameters. 
    Circles are generated parameters, dark-grey circles are observed data.
    (Center Left) Cross-persona perturbation correlation structure. Higher $ \rho $ means that a given perturbation is more likely to shift the preference  in the same direction across personas.
    (Center Right, Right) CDFs of the distribution of p-values for tests where semantically equivalent perturbations are treated as part of the ``null". 
    The sign test is invalid for $ \rho > 0 $. 
    The proposed permutation test is valid in all settings.
    }
    \label{fig:generative-model}
\end{figure*}

\paragraph{(Human) surveying sensitivity.} Surveying methodology has long 
recognized that response distributions are sensitive to factors unrelated to 
the construct being measured, including question wording 
\citep{sudman1974response, schuman1981questions, tourangeau2000psychology, 
krosnick2010question}, question ordering 
\citep{schuman1981questions, tourangeau2000psychology}, and interviewer 
effects \citep{groves1989survey, biemer2003introduction}. 
These sensitivities motivate careful experimental design, including split-sample 
designs to assess robustness to wording choices 
\citep{sudman1974response, schuman1981questions}. 
Our work is motivated by an analogous sensitivity in generative surveying, 
where LLM-based personas exhibit systematic variation across semantically 
equivalent prompt formulations.

\paragraph{Generative surveying.} Generative surveying has emerged as a popular 
approach for simulating population-level feedback using LLM-based personas 
\citep{argyle2023out, brand2023using, horton2023large}. 
Recent work has demonstrated that LLMs can approximate human survey responses 
across demographic groups and opinion domains \citep{ aher2023using, 
dillion2023can}, and has explored persona-based simulation for preference 
elicitation \citep{brand2023using, hamalainen2023evaluating}, policy evaluation 
\citep{park2024generative, helm2025digital}, and A/B testing 
\citep{brand2023using, yeykelis2024using}. 
The validity of LLMs as proxies for human subjects remains actively debated 
\citep{bisbee2024synthetic}, with concerns raised about demographic 
representativeness \citep{santurkar2023whose} and the stability of simulated 
opinions across prompt formulations \citep{sclar2023quantifying, mizrahi2024state}.
We propose a hypothesis test designed to be stable across prompt formulations and demonstrate its necessity for valid inference.

\paragraph{LLM prompt sensitivity.} The sensitivity of LLM outputs to prompt 
wording has been extensively documented 
\citep{zhao2021calibrate, lu2022fantastically, sclar2023quantifying, 
mizrahi2024state, ness2024medfuzz}, including in the context of evaluation 
benchmarks \citep{sclar2023quantifying, mizrahi2024state} and classification 
tasks \citep{zhao2021calibrate, lu2022fantastically}. 
Statistical approaches for handling this sensitivity include aggregating responses 
across prompt formulations \citep{mizrahi2024state} and testing for consistency 
across perturbations \citep{acharyya2025testing}. 

\section{Problem Setting \& Methodology}
\label{sec:method}

Let $f$ denote a generative model and let $\mathcal{Q}$ denote a set of 
queries. Given a query $q \in \mathcal{Q}$, the model produces a 
response $f(q) \in \mathcal{X} $ sampled from $ P_{f}(q) \in \mathcal{P} $.
Repeatedly querying the model $R$ times with 
the same query yields i.i.d.\ samples $f(q)_1, \ldots, f(q)_R$.
Finally, let $ p(P_{f}(q)) \in \mathbb{R} $ be a one-dimensional parameter of distribution $ P_{f} $ and $ \hat{p} $ be its plug-in estimate \citep{bickel1977mathematical}.

Different system prompts induce different response distributions for a given $ q $. 
Since a black-box model is completely characterized by its corresponding response distributions \citep{helm2026blackboxmodelclassificationdiscriminative}, different personas are different models.
A generative survey presents $ M $ different semantically equivalent perturbations of two queries, message $A$ and message $B$, represented at $ q^{A} $ and $ q^{B} $, respectively, to $ N $ personas.
Let $p^A_{ij} := p(P_{f_{i}}(q^{A}_{j}))$ 
and $p^B_{ij} := p(P_{f_{i}}(q^{B}_{j}))$ denote the parameter corresponding to persona $i$ and perturbation 
$j$ under each message. 

The primary goal is to determine whether the population-level preference for message $A$ differs from that of message $B$. 
Defining $\delta_i = \bar{p}^A_i - \bar{p}^B_i$ as the preference 
difference for persona $i$, where $\bar{p}^A_i = \frac{1}{M}\sum_{j=1}^M 
p^A_{ij}$ and $\bar{p}^B_i = \frac{1}{M}\sum_{j=1}^M p^B_{ij}$, the 
hypothesis of interest is
\begin{equation}
    H_0: \mathbb{E}[\delta_i] = 0 \quad \text{vs} \quad 
    H_A: \mathbb{E}[\delta_i] \neq 0.
    \label{eq:hypothesis}
\end{equation}
Given estimated persona preference differences $ \hat{\delta}_{1}, \hdots, \hat{\delta}_{N} $, the sign test and Wilcoxon signed-rank test are standard approaches for 
testing Eq.~\eqref{eq:hypothesis}.
Both tests assume that cross-persona observations are independent.

\subsection{A Statistical Model For Generative Surveying}
\label{subsec:generative-model}

We next describe a statistical model for generative surveying that makes the perturbation structure explicit.
The model is designed for the setting where $ p(P_{f}(q)) $ is the probability of success of a Bernoulli random variable and has three levels corresponding to personas, perturbations, and replicates. 

\paragraph{Persona level.} Each persona $i \in [N]$ has a latent preference 
$p_i \sim \text{Beta}(\alpha_0, \beta_0)$, representing their baseline 
probability of responding ``Yes''.

\paragraph{Perturbation level.} For each perturbation $j \in [M]$, the 
preference of persona $i$ is perturbed on the logit scale. The two messages 
share the same perturbation structure but differ by a population-level effect 
$\beta_1$:
\begin{align*}
    & \text{logit}(p^A_{ij}) = \text{logit}(p_i) + u_j + \varepsilon_{ij}, 
    \\
    & \text{logit}(p^B_{ij}) = \text{logit}(p_i) + \beta_1 + u_j + \varepsilon_{ij},
\end{align*}
where $u_j \sim \mathcal{N}(0, \sqrt{\rho/\gamma})$ is a shared perturbation 
effect common to all personas, and 
$\varepsilon_{ij} \sim \mathcal{N}(0, \sqrt{(1-\rho)/\gamma})$ is an independent 
effect specific to persona $i$ and perturbation $j$. The parameter 
$\gamma$ controls the total perturbation concentration, and 
$\rho \in [0, 1]$ is the amount of perturbation variance shared across personas 
(Figure~\ref{fig:generative-model}b). Under $H_0$, $\beta_1 = 0$ and the 
two messages are exchangeable; under $H_A$, $\beta_1 \neq 0$. When $\rho = 0$, 
perturbation effects are fully independent across personas. When $\rho > 0$, 
perturbation shifts preferences across personas in the same direction. It is 
precisely this shared shift that renders the sign test and Wilcoxon signed-rank 
test invalid.

\paragraph{Replicate level.} Each (persona, perturbation) pair is queried 
$R$ times independently:
\begin{equation*}
    Y_{ijr} \mid p_{ij} \sim \text{Bernoulli}(p_{ij}), 
    \quad r \in [R].
\end{equation*}
The generative structure is shown in Figure~\ref{fig:generative-model}a.
Example perturbed preference probabilities under different $ \rho $ are shown in Figure~\ref{fig:generative-model}b.
Analogous statistical models for other distributions with parameters in $ \mathbb{R} $ are possible.
The permutation test described next is directly applicable to these settings.

\subsection{Permutation Test}
\label{subsec:permutation-test}
We propose a permutation test that respects the cross-persona dependence structure induced by $\rho > 0$. 
For each perturbation $j \in [M]$, we compute the average estimated preference difference across personas:
\begin{equation*}
    \hat{d}_j = \frac{1}{N} \sum_{i=1}^N \bar{Y}^A_{ij\cdot} - 
    \bar{Y}^B_{ij\cdot},
\end{equation*}
where $\bar{Y}^A_{ij\cdot} = \frac{1}{R}\sum_{r=1}^R Y^A_{ijr}$. The test 
statistic is the mean of the perturbation-level differences:
\begin{equation*}
    T = \frac{1}{M} \sum_{j=1}^M \hat{d}_j.
\end{equation*}
Under $H_0$, the joint distribution of 
$(\hat{d}_1, \ldots, \hat{d}_M)$ is invariant to sign-flipping 
of the message labels. We approximate the null distribution of $T$ by 
randomly permuting the message labels across $\mathcal{B}$ independent sign-flip 
permutations and recomputing $T$ for each. The p-value is the proportion 
of permuted statistics at least as extreme as the observed $T$.
The test procedure for testing (i.e., Eq. \ref{eq:hypothesis}) is described in Algorithm \ref{alg:permutation-test}.

\begin{algorithm}[t]
\caption{Permutation Test for Generative Surveying}
\label{alg:permutation-test}
\KwIn{Responses $\{Y^A_{ijr}\}$ and $\{Y^B_{ijr}\}$ for $i \in [N]$, 
      $j \in [M]$, $r \in [R]$; significance level $\alpha$; 
      number of permutations $\mathcal{B}$}
\KwOut{p-value, reject/fail to reject $H_0$}

\tcp{Compute perturbation-level statistics}
\For{$j = 1, \ldots, M$}{
    $\bar{Y}^A_{ij\cdot} \leftarrow \frac{1}{R}\sum_{r=1}^R Y^A_{ijr}$ 
    for each $i \in [N]$\;
    $\bar{Y}^B_{ij\cdot} \leftarrow \frac{1}{R}\sum_{r=1}^R Y^B_{ijr}$ 
    for each $i \in [N]$\;
    $\hat{\delta}_j \leftarrow \frac{1}{N}\sum_{i=1}^N 
    \left(\bar{Y}^A_{ij\cdot} - \bar{Y}^B_{ij\cdot}\right)$\;
}

\tcp{Compute observed test statistic}
$T \leftarrow \frac{1}{M}\sum_{j=1}^M \hat{\delta}_j$\;

\tcp{Approximate null distribution via sign-flip permutations}
\For{$b = 1, \ldots, \mathcal{B}$}{
    Sample $\sigma^{(b)} = (\sigma^{(b)}_1, \ldots, \sigma^{(b)}_M)$ 
    with each $\sigma^{(b)}_j \overset{i.i.d.}{\sim} 
    \text{Uniform}\{-1, +1\}$\;
    $T^{(b)} \leftarrow \frac{1}{M}\sum_{j=1}^M 
    \sigma^{(b)}_j \hat{\delta}_j$\;
}

\tcp{Compute p-value}
$\text{p-value} \leftarrow \frac{1}{\mathcal{B}}\sum_{b=1}^{\mathcal{B}} 
\mathbf{1}\!\left[|T^{(b)}| \geq |T|\right]$\;

\KwRet{p-value; reject $H_0$ if p-value $\leq \alpha$}
\end{algorithm}

\paragraph{Validity.} Figure~\ref{fig:generative-model}c shows the empirical CDF of p-values under $H_0$ for the sign test and the permutation test across three values of $\rho$.
The CDF of a valid test tracks $ y = x $.
The CDF of an oversized test bows above it.
At $\rho = 0$, both tests are valid. 
For $\rho > 0$, the sign test CDF bows above the diagonal, with the deviation growing with $\rho$, confirming that $\rho > 0$ is sufficient for the sign test to be oversized. 
The permutation test tracks the diagonal across all values of $\rho$. 
Figure~\ref{fig:generative-model}d shows that the sensitivity of the sign test and the validity permutation test are robust to the allocation of budget across $M$ and $R$ at fixed $\rho$.

\subsection{Theoretical Properties}
\label{subsec:theory}

We next formalize that the sign test is invalid with cross-persona correlation ($ \rho > 0 $) and confirm the validity and consistency of the proposed permutation test.
We provide the full argument below each statement.

\begin{theorem}[Invalidity of the Sign Test]
\label{thm:invalidity}
Under $H_0$ and the generative model defined in 
Section~\ref{subsec:generative-model}. Assume $ \gamma > 0 $. Then the sign test has size strictly 
greater than $\alpha$ for all $ \alpha \in (0,1) $ and for all values of $\rho > 0$, $N \geq 2$, $M \geq 1$, and 
$R \geq 1$.
\end{theorem}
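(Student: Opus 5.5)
The plan is to condition on the shared perturbation effects. Conditionally on them, the persona-level statistics become independent, and the sign test statistic becomes a mixture of binomials whose success probability is random. The proof then splits into three steps: (i) reduce to this conditional structure, (ii) show the conditional rejection probability is at least $\alpha$, with equality only at success probability $1/2$, and (iii) show the success probability differs from $1/2$ with positive probability whenever $\rho>0$.

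I read the model as in the motivating null construction, where the two messages are represented by \emph{different} perturbation sets. So there are shared effects $u^A_j, u^B_j$, drawn independently from $\mathcal{N}(0,\sqrt{\rho/\gamma})$, and persona-specific effects $\varepsilon^A_{ij},\varepsilon^B_{ij}$. If $A$ and $B$ literally shared $u_j$ and $\varepsilon_{ij}$ with $\beta_1=0$, then $Y^A_{ijr}$ and $Y^B_{ijr}$ would be conditionally i.i.d. and the sign test would be exact; this reading is what makes the statement non-trivial. I also take the sign test in its randomized form, so that its size is exactly $\alpha$ under independence. Without this, for small $N$ and small $\alpha$ the non-randomized rejection region can be empty and the size is $0$.

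\textbf{Step 1 (conditional structure).} Let $U=(u^A_1,\dots,u^A_M,u^B_1,\dots,u^B_M)$. Given $U$, the triples $(p_i,\{\varepsilon_{ij}\},\{Y_{ijr}\})$ are i.i.d. across personas $i$. Hence $\hat\delta_1,\dots,\hat\delta_N$ are conditionally i.i.d. Define $\pi(U)=P(\hat\delta_i>0\mid U,\hat\delta_i\neq 0)$. Conditionally on $U$ and on the number $K$ of nonzero $\hat\delta_i$, the count $S$ of positive signs is $\mathrm{Bin}(K,\pi(U))$. Swapping the roles of $u^A$ and $u^B$ maps $\pi$ to $1-\pi$, and $U$ is exchangeable under this swap under $H_0$, so $\pi(U)$ has a distribution symmetric about $1/2$.

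\textbf{Step 2 (conditional power).} Let $\phi_K$ be the randomized equal-tailed two-sided sign test. For $K\ge 1$ it is the UMP unbiased level-$\alpha$ test of $\pi=1/2$ in the binomial family. Therefore $g_K(\pi):=E_\pi[\phi_K(S)]$ satisfies $g_K(1/2)=\alpha$ and $g_K(\pi)>\alpha$ for every $\pi\ne 1/2$. The strictness follows from the strictly increasing monotone likelihood ratio and the fact that $\phi_K$ is nondegenerate. The size of the unconditional test is then
\[
E\bigl[g_K(\pi(U))\bigr].
\]
On the event $\{K=0\}$ we set the rejection probability to $\alpha$ by randomization, or else note that this event's contribution is handled by the same convention as in the independent case.

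\textbf{Step 3 (nondegeneracy of $\pi(U)$).} It remains to show $P(\pi(U)\neq 1/2)>0$ when $\rho>0$, $\gamma>0$, $N\ge 2$. This requires $K\ge1$ with positive probability, which holds because all $p_{ij}\in(0,1)$. I expect this to be the main obstacle. The approach is to show that $\pi(U)$ depends continuously on $U$ and to exhibit a region where it is bounded away from $1/2$. As $u^A_j\to+\infty$ and $u^B_j\to-\infty$ for all $j$, we get $p^A_{ij}\to1$ and $p^B_{ij}\to0$ in probability. Then $\bar Y^A_{ij\cdot}-\bar Y^B_{ij\cdot}\to1$, so $\pi(U)\to1$. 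By continuity (dominated convergence over $p_i,\varepsilon$), $\pi(U)>1/2$ on an open set of $U$. Since $\rho>0$, the Gaussian law of $U$ has full support, so this set has positive probability.

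Combining the three steps, $E[g_K(\pi(U))]>\alpha$, which proves the theorem. The $\rho=0$ case is the boundary where $U$ is degenerate, $\pi\equiv1/2$, and the inequality becomes an equality.
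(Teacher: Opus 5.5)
Your route differs from the paper's. The paper argues only at the level of second moments. It says the shared $u_j$ correlates $p_{ij}$ and $p_{i'j}$, asserts that this correlation carries over to the indicators $\mathbf{1}[D_i>0]$, and concludes that $\mathrm{Var}(S)>N/4$, so the sign test's critical values are too small. You instead condition on $U$, which gives $S\mid(U,K)\sim\mathrm{Bin}(K,\pi(U))$, and then use unbiasedness of the conditional randomized test. That is what actually turns overdispersion into a rejection probability above $\alpha$ at every level; a variance inequality alone does not give this, especially for a discrete statistic and arbitrary $\alpha$. Your two modelling caveats are also ones the paper's argument needs but does not state:
\begin{itemize}
\item If $A$ and $B$ literally share $u_j$ and $\varepsilon_{ij}$, then given all $p_{ij}$ the nonzero $\hat\delta_i$ have independent fair signs. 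The cross-persona correlation then never reaches the sign test.
\item The non-randomized test with $N=2$ and $\alpha<1/2$ never rejects.
\end{itemize}

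One step is wrong as written, though it is easily repaired. For $K=1$ the randomized equal-tailed test is constant, $\phi_1\equiv\alpha$: each point has null mass $1/2$ and is rejected with probability $\alpha$. So $g_1(\pi)=\alpha$ for all $\pi$, and your strictness claim for all $K\ge 1$ fails. The claim does hold for $K\ge 2$. There $\phi_K(0)=\min(1,2^{K-1}\alpha)>\alpha$, so $\phi_K$ is a nonconstant, symmetric, U-shaped test whose power is strictly minimized at $\pi=1/2$.

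What you need is therefore $P(K\ge 2,\ \pi(U)\neq 1/2)>0$, not $K\ge 1$ with positive probability. Given $U$, $K\sim\mathrm{Bin}(N,1-c(U))$ with $c(U)=P(\hat\delta_i=0\mid U)<1$. Hence $P(K\ge 2\mid U)>0$ exactly when $N\ge 2$, and this is where that hypothesis is used. As written, your argument never uses $N\ge 2$ and would equally ``prove'' the false case $N=1$, where the randomized test has size exactly $\alpha$. With that change, your Step 3 completes the proof: continuity of $\pi$, the limit $\pi\to 1$, and full support of $U$ when $\rho>0$.
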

\begin{proof}[Proof]
Let $ \bar{Y}^{A}_{i\cdot\cdot} = \frac{1}{M}\sum_{j}^{M} \bar{Y}^{A}_{ij\cdot} $.
Let $ D_{i} = \hat{\bar{p}}^{A}_{i} - \hat{\bar{p}}_{i}^{B}$ and $ S = \sum_{i}^{N} \mathbf{1}[D_{i} > 0] $. 
The sign test treats the estimated per-persona preference difference indicators ($ \mathbf{1}[D_{i} > 0] $ as i.i.d.\ 
Bernoulli$(\frac{1}{2})$ under $H_0$, implying $\text{Var}_{\text{sign}}(S) 
= N/4$. When $\rho > 0$ and $ \gamma > 0 $, the shared perturbation effect $u_j$ induces 
strictly positive covariance between $p_{ij}$ and $p_{i'j}$ 
for $i \neq i'$, which propagates to the aggregated preferences and 
consequently to the preference indicators. The true variance therefore 
satisfies $\text{Var}_{\text{true}}(S) > N/4$, so the sign test 
systematically underestimates the variance of its statistic, uses critical 
values that are too small, and rejects $H_0$ too often. 
\end{proof}

\begin{theorem}[Validity of the Permutation Test]
\label{thm:validity}
Under $H_0$ and the generative model defined in 
Section~\ref{subsec:generative-model}, the permutation test has exact size 
$\alpha$ for all $\alpha \in (0, 1)$ and all values of $N \geq 1$, $M \geq 1$, 
$R \geq 1$, $\rho \in [0, 1]$, and $\gamma > 0$.
\end{theorem}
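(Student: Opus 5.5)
The plan is to reduce validity to a distributional invariance of the perturbation-level differences $(\hat{d}_1,\ldots,\hat{d}_M)$ under sign flips, and then invoke the standard randomization-test argument for a finite group acting on the data. Unlike Theorem~\ref{thm:invalidity}, no variance or covariance computation is needed. The dependence induced by $u_j$ and by the shared persona baseline $p_i$ is simply carried along inside the conditioning.

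\textbf{Step 1 (conditional exchangeability of the labels).} Under $H_0$ we have $\beta_1=0$, so $\text{logit}(p^A_{ij})=\text{logit}(p^B_{ij})$ and hence $p^A_{ij}=p^B_{ij}=:p_{ij}$ for every $i,j$. Condition on the full latent vector $\mathcal{L}=(p_1,\ldots,p_N,u_1,\ldots,u_M,\{\varepsilon_{ij}\})$. Given $\mathcal{L}$, all responses $Y^A_{ijr}$ and $Y^B_{ijr}$ are mutually independent, with $Y^A_{ijr},Y^B_{ijr}\sim\text{Bernoulli}(p_{ij})$.

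Fix any $\sigma\in\{-1,+1\}^M$. Let $g_\sigma$ be the map that swaps the A and B response arrays $\{Y^A_{ijr}\}_{i,r}$ and $\{Y^B_{ijr}\}_{i,r}$ for exactly those $j$ with $\sigma_j=-1$. Because the conditional law is a product of identical Bernoulli factors across the A and B labels, $g_\sigma$ preserves the conditional law given $\mathcal{L}$. Integrating over $\mathcal{L}$ then shows that it preserves the unconditional law as well. This step holds for every $\rho\in[0,1]$ and $\gamma>0$, since it never uses the distribution of $\mathcal{L}$.

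Since $\hat{d}_j$ is antisymmetric under swapping the labels within perturbation $j$, it follows that
\begin{equation*}
(\sigma_1\hat{d}_1,\ldots,\sigma_M\hat{d}_M)\overset{d}{=}(\hat{d}_1,\ldots,\hat{d}_M)\quad\text{for all }\sigma\in\{-1,+1\}^M.
\end{equation*}
This uses $N\ge1$, $M\ge1$, $R\ge1$ only through the definitions. Note also that the flips act jointly across all personas within a perturbation. This is exactly why the cross-persona correlation, which breaks the sign test in Theorem~\ref{thm:invalidity}, is harmless here.

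\textbf{Step 2 (randomization argument).} Let $G=\{-1,+1\}^M$ act on $\hat{d}=(\hat{d}_1,\ldots,\hat{d}_M)$, and let $T(\sigma\hat d)$ denote the statistic recomputed after applying the flip $\sigma$. By Step~1, conditional on the orbit $\{\sigma\hat d:\sigma\in G\}$, the observed $\hat d$ is uniformly distributed over the orbit, counting multiplicity. Hence, conditional on the orbit, the observed $|T|$ is uniformly distributed over the multiset $\{|T(\sigma\hat d)|:\sigma\in G\}$. The standard lemma (Lehmann--Romano) then gives that the full-enumeration p-value $\frac{1}{2^M}\sum_{\sigma}\mathbf{1}[|T(\sigma\hat d)|\ge|T|]$ satisfies $\Pr(\text{p}\le\alpha)\le\alpha$. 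Exact equality $\alpha$ is obtained with the usual randomized tie-breaking at the critical value. I will state and prove the conditional-uniformity step explicitly, by averaging the rejection indicator over $G$ and using that this average is at most $\alpha$ pointwise.

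\textbf{Step 3 (Monte Carlo version and the main obstacle).} Algorithm~\ref{alg:permutation-test} draws $\mathcal{B}$ random flips rather than enumerating $G$. Conditionally on the data, the identity flip is exchangeable with the $\mathcal{B}$ sampled flips. Therefore the corrected p-value $(1+\sum_b\mathbf{1}[|T^{(b)}|\ge|T|])/(\mathcal{B}+1)$ is valid for every $\mathcal{B}$.

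I expect this to be the real obstacle: making ``exact size $\alpha$'' literally true. Three issues arise.
\begin{itemize}
\item The data are discrete, and $G$ is finite, with $|G|=2^M$ small when $M$ is small. Without randomization, achievable sizes lie on a lattice, and the test is then conservative rather than exact.
\item The uncorrected Monte Carlo p-value in the algorithm can be slightly anti-conservative.
\item Sign-flip symmetry produces ties in $|T|$: $\sigma$ and $-\sigma$ always give the same value. In addition, $T=0$ occurs with positive probability.
\end{itemize}
I would handle these by proving validity in the form $\Pr(\text{p}\le\alpha)\le\alpha$ for the $+1$-corrected or fully enumerated p-value. I would then prove exactness under a randomized tie-breaking version of the test. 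Finally, I would remark that the algorithm's p-value converges to the exact one as $\mathcal{B}\to\infty$.
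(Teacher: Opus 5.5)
Your proposal is correct and follows the same route as the paper: under $\beta_1=0$ one has $p^A_{ij}=p^B_{ij}$, so swapping labels jointly across all personas within a perturbation leaves the law of $(\hat{d}_1,\ldots,\hat{d}_M)$ invariant under every sign flip, and your conditioning on the latents together with the Lehmann--Romano group-averaging step makes rigorous what the paper asserts in one line. Your Step 3 caveats are justified, since the paper's claim $\Pr(\text{p-value}\le\alpha)=\alpha$ cannot hold literally for the deterministic test on discrete data (for $M=1$ the p-value is identically $1$, so the test never rejects); your $\le\alpha$ statement for the enumerated or $+1$-corrected p-value, with exactness only under randomized tie-breaking, is the right formulation, with one small fix: the identity flip is exchangeable with the sampled flips unconditionally (by the Step 1 invariance plus the group structure), not conditionally on the data.
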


\begin{proof}[Proof]
Under $H_0$, $\beta_1 = 0$ implies $p^A_{ij} = p^B_{ij}$ for all $i, j$, 
so swapping the $A$ and $B$ labels for perturbation $j$ negates 
$\hat{d}_j$ without changing its distribution. Since this holds jointly 
across all $j \in [M]$, the vector $(\hat{d}_1, \ldots, \hat{d}_M)$ 
is exchangeable under sign-flipping of the message labels, and the 
permutation distribution of $T$ is the exact null distribution. Therefore 
$\Pr(\text{p-value} \leq \alpha \mid H_0) = \alpha$ for all parameter values. 
\end{proof}

\begin{theorem}[Consistency of the Permutation Test]
\label{thm:consistency}
Under $H_A: \mathbb{E}[\delta_i] \neq 0$ and the generative model defined 
in Section~\ref{subsec:generative-model}, the power of the permutation test 
converges to $1$ as $N \to \infty$ for any fixed $M \geq 1$, $R \geq 1$, 
$\rho \in [0, 1]$, $\gamma > 0$, and $\beta_1 \neq 0$.
\end{theorem}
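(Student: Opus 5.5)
The plan is to work conditionally on the shared perturbation effects $u = (u_1,\ldots,u_M)$ and let $N \to \infty$ with $M$, $R$, $\rho$, $\gamma$, $\beta_1$ fixed. The argument has three steps: a law of large numbers across personas, a strict sign property of the limiting perturbation-level differences, and the observation that this sign property makes the observed statistic extreme within the sign-flip distribution. The third step also forces a caveat on $\alpha$, since with $M$ fixed the permutation distribution has only finitely many points.

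\textbf{Step 1 (law of large numbers across personas).} Conditional on $u$, the persona-level blocks $\{p_i, (\varepsilon_{ij})_{j}, (Y^A_{ijr}, Y^B_{ijr})_{j,r}\}$ are i.i.d.\ over $i$. Each is a bounded function of independent draws of $p_i$, $\varepsilon_{ij}$ and the Bernoulli replicates. Hence, for each fixed $j$, the strong law gives $\hat d_j \to d_j^\ast(u_j)$ almost surely, where
\begin{equation*}
d_j^\ast(u_j) = \mathbb{E}\bigl[\sigma(\text{logit}(p) + u_j + \varepsilon) - \sigma(\text{logit}(p) + \beta_1 + u_j + \varepsilon) \,\big|\, u_j\bigr],
\end{equation*}
with $\sigma$ the logistic function. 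This uses that $\mathbb{E}[\bar Y_{ij\cdot} \mid p_{ij}] = p_{ij}$. Since $M$ is fixed, the convergence holds jointly for the vector $(\hat d_1,\ldots,\hat d_M)$.

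\textbf{Step 2 (a common strict sign).} Because $\sigma$ is strictly increasing, the integrand has sign $-\text{sign}(\beta_1)$ pointwise for every $p$, $\varepsilon$ and $u_j$. Therefore $d_j^\ast(u_j) \neq 0$, and all $d_j^\ast$ share this single sign for every realization of $u$. This is the point where $\rho$ plays no role: the shared shift $u_j$ changes the magnitude of $d_j^\ast$ but never its sign. It is also the key structural fact, since it shows the alternative $\beta_1 \neq 0$ is visible in every perturbation simultaneously.

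\textbf{Step 3 (the p-value limit), which is the main obstacle.} Write $|T(\sigma)| = \frac1M\bigl|\sum_j \sigma_j \hat d_j\bigr|$. In the limit, when all $d_j^\ast$ have the same strict sign, $\bigl|\sum_j \sigma_j d_j^\ast\bigr|$ is maximized exactly at $\sigma = \pm \mathbf 1$. Every other sign pattern gives a strictly smaller value. The gap between the maximum and the next largest value is at least $\frac{2}{M}\min_j |d_j^\ast| > 0$.

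By Step 1, for $N$ large this strict ordering transfers to the $\hat d_j$ almost surely. So the exact enumeration p-value converges almost surely to $2/2^M = 2^{1-M}$. Under Algorithm~\ref{alg:permutation-test}, the Monte Carlo p-value converges to a $\text{Binomial}(\mathcal B, 2^{1-M})/\mathcal B$ variable.

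The obstacle is that this limit is not $0$. Consequently the power tends to $1$ only when $\alpha > 2^{1-M}$, with the limiting power at $\alpha = 2^{1-M}$ depending on tie-breaking and Monte Carlo error. Taking $\mathcal B \to \infty$ suitably, or using full enumeration, handles that boundary case. For smaller $\alpha$ the test can never reject.

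I would therefore prove the theorem with the explicit granularity condition $\alpha > 2^{1-M}$ added, noting that it is implicitly needed for any fixed $M$. The final step is to pass from the conditional statement to the unconditional one via dominated convergence over $u$. This works because the limiting rejection indicator equals $1$ for every $u$ once $\alpha > 2^{1-M}$.
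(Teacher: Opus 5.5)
Your proposal is correct. It shares the paper's underlying idea (in the limit the observed statistic lands on the top atom of the sign-flip distribution), but executes it differently and more soundly. The paper argues that the permutation critical value tends to $|\beta_1|\,q_{1-\alpha}$, with $q_{1-\alpha}<1$ a quantile of a Rademacher average, while $T \to \beta_1$. Your route avoids three problems in that sketch.

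\begin{enumerate}
\item $T$ averages differences of response rates, so its limit is $\frac{1}{M}\sum_j d_j^\ast(u_j)$ on the probability scale, not $\beta_1$. Its sign is that of $-\beta_1$, which is harmless since the test uses $|T|$.
\item For $\rho>0$ the limits $d_j^\ast(u_j)$ are random and unequal across $j$. So neither $T$ nor the critical value has a deterministic limit, and the permutation distribution is not a rescaled Rademacher average. Conditioning on $u$, and requiring only a common strict sign of the $d_j^\ast$ rather than a common value, is what lets your argument cover every $\rho\in[0,1]$.
\item $q_{1-\alpha}<1$ silently needs $\alpha\ge 2^{1-M}$, since $|\frac{1}{M}\sum_j\sigma_j|=1$ with probability $2^{1-M}$. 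For $M=1$ every flip gives $|T^{(b)}|=|T|$, the p-value is identically $1$, and the power is $0$. So the statement as written is false. Even the paper's own $M=5$, $\alpha=0.05$ configurations lie in the regime $\alpha<2^{1-M}=1/16$, where the exact test can never reject. Your granularity condition is therefore necessary, not cosmetic.
\end{enumerate}

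One refinement of your Step 3: with a fixed number $\mathcal{B}$ of Monte Carlo flips, the Monte Carlo error is not confined to the boundary. The limiting power is $\Pr\bigl(\mathrm{Bin}(\mathcal{B},2^{1-M})\le\alpha\mathcal{B}\bigr)$. This is strictly below $1$ for every $\alpha<1$, because all $\mathcal{B}$ draws can land on $\pm\mathbf{1}$. You should therefore state the theorem in one of two forms.
\begin{itemize}
\item For the exact-enumeration test, $\alpha\ge 2^{1-M}$ suffices. The boundary is included because the rule is p-value $\le\alpha$.
\item For Algorithm~\ref{alg:permutation-test} with $\mathcal{B}=\mathcal{B}_N\to\infty$, you need $\alpha>2^{1-M}$ strictly. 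Growing $\mathcal{B}$ does not rescue the boundary, since the Monte Carlo p-value fluctuates on both sides of $2^{1-M}$.
\end{itemize}
In the exact case your closing step is accurate as written. The rejection indicator is eventually $1$ almost surely for every $u$, and bounded convergence then gives unconditional power tending to $1$.
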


\begin{proof}[Proof]
The permutation critical value converges in probability to $|\beta_1| \cdot q_{1-\alpha}$, 
where $q_{1-\alpha} < 1$ is the $(1-\alpha)$ quantile of a Rademacher average. Since 
$T \xrightarrow{p} \beta_1$ while the critical value converges to a strictly smaller 
quantity, the rejection probability converges to $1$.
\end{proof}

\section{Experiments}
\label{sec:experiments}

We apply the framework from \S\ref{sec:method} to the sneakers versus 
boots example introduced in \S\ref{sec:motivating-example}. 
\begin{figure}
    \centering
    \includegraphics[width=0.95\linewidth]{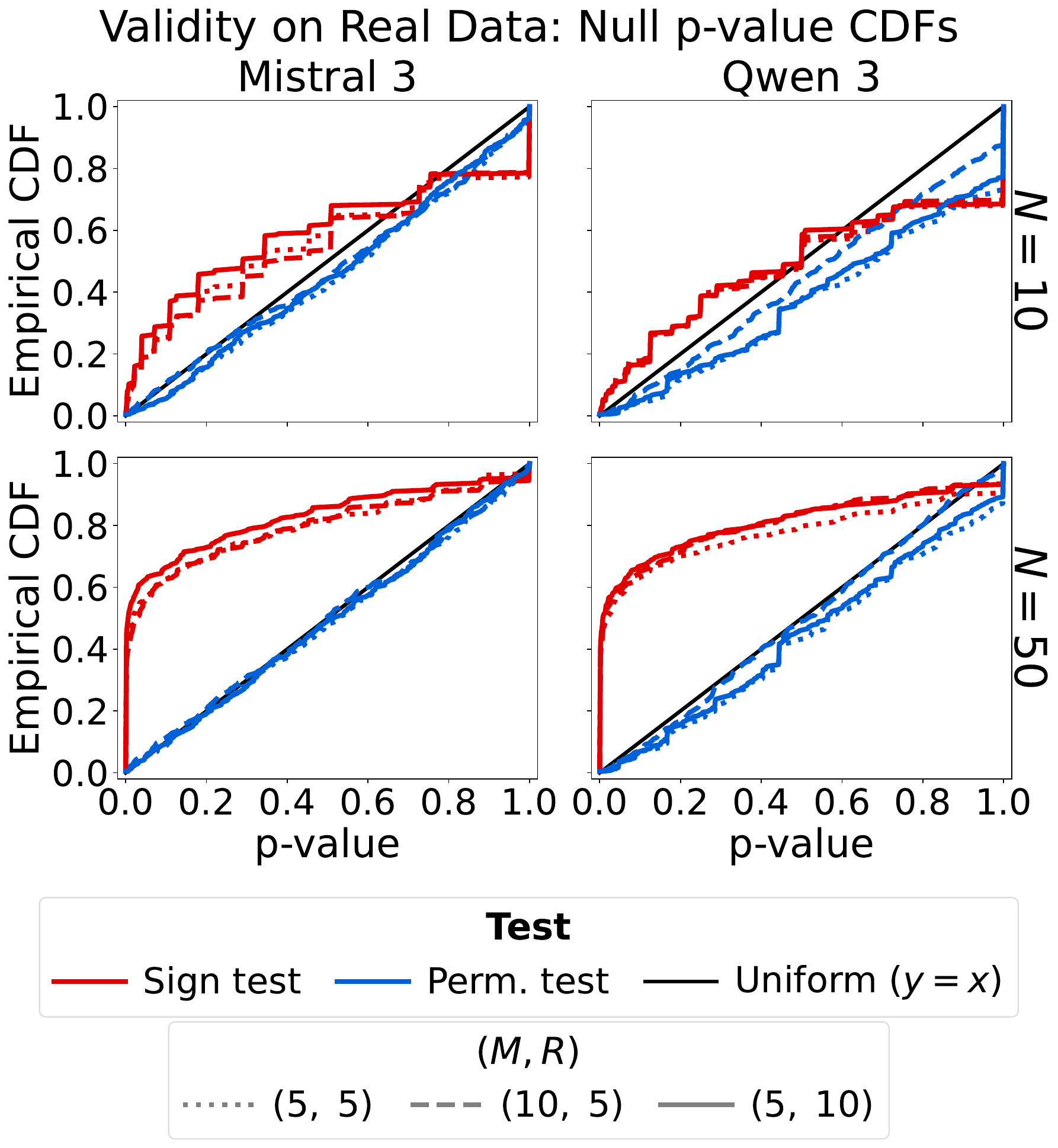}
    \caption{Empirical CDF of p-values under the null condition (sneakers vs.\ sneakers 
split) for the sign test and permutation test on real data. Each curve shows 
the median CDF across models in each family. The sign test is oversized across 
all families, configurations, and values of $N$. The permutation test is valid 
at $N = 50$ and mildly conservative at $N = 10$ due to the coarseness of the 
permutation distribution at small $M$.}
    \label{fig:validity-real}
\end{figure}

\subsection{Experimental Setup}
\label{subsec:setup}

We collect data across ten models spanning two families: \texttt{Mistral-3} and \texttt{Qwen-3}
(see \ref{app:models} for complete model list and generation parameters). 
For each experimental configuration $(N, M, R)$, we generate 200 random samples and compute rejection rates at $\alpha = 0.05$.
We provide estimates of Type-I error under the null condition (sneakers vs.\ sneakers split) and power under the alternative condition (sneakers vs.\ boots).

\subsection{Validity on Real Data}
\label{subsec:validity}

To verify that the permutation test controls Type-I error on real data, we 
compute the empirical CDF of p-values under the null condition for both the 
sign test and the permutation test. For each model and configuration $(N, M, 
R)$, we draw 200 sub-samples and record the p-value of each test. We report the median empirical CDF across models within each family. 
Individual model rejection rates across all configurations are reported in 
Table~\ref{tab:validity} in Appendix \ref{app:validity}.

Figure~\ref{fig:validity-real} shows results for three allocations -- $(M, R) 
\in \{(5, 5), (10, 5), (5, 10)\}$ -- at $N \in \{10, 50\}$ for both families. 
The sign test is oversized across all families, configurations, and values of 
$N$, with deviation from the diagonal growing with $M$, consistent with the 
simulation results in Figure~\ref{fig:generative-model}. The permutation test 
tracks the diagonal closely at $N = 50$. At $N = 10$, the permutation test is 
mildly conservative -- the empirical CDF falls below the diagonal at small 
p-values -- a consequence of the coarseness of the permutation distribution 
when $M$ is small: with $M = 5$ perturbations, the permutation distribution 
has limited resolution, producing a discrete set of attainable p-values that 
cannot approximate the uniform distribution arbitrarily closely. This 
conservatism is a property of the test at small $M$ rather than a failure of 
validity, and disappears as $M$ grows.

\subsection{Power Profiling}
\label{subsec:power}

We characterize the power of the permutation test on real data by computing 
rejection rates under the alternative condition (sneakers vs.\ boots) via 
the same sub-sampling procedure used for validity assessment. Power varies 
substantially across models, driven primarily by differences in effect size. 
\texttt{Mistral-3-14B} and \texttt{Qwen-3-14B} achieve the highest power, reaching $1.00$ at 
moderate budgets under the $N=50$, $M=20$, $R=5$ allocation. Models with 
near-degenerate response distributions (i.e., where $\Pr[\text{Yes}] \approx 0$ or $\Pr[\text{Yes}] \approx 1$; seen in \texttt{Qwen-3-0.6B} through \texttt{-4B} and \texttt{Mistral-3-
Large}) exhibit low power regardless of budget allocation, reflecting small 
or near-zero effect sizes rather than limitations of the test. Across all 
models, allocating budget toward perturbations ($M$) rather than replicates 
($R$) consistently yields higher power: comparing $M=20, R=5$ against $M=5, 
R=20$ at fixed $N$ shows substantial gains for nearly every model. Power 
estimates are reported in Table~\ref{tab:parameters_power}.

\subsection{Parameter Estimation}
\label{subsec:parameter-estimation}
\begin{table*}[t]
\centering
\caption{Estimated generative model parameters and permutation test power at varying budgets
$B = N \times M \times R$ for each model. Standard errors of the estimates are provided in parentheses.
Power columns denote $M$/$R$ allocations for each $N$.
\texttt{Qwen3-0.6B} is degenerate (always responds ``yes'') and parameters cannot be estimated.}
\label{tab:parameters_power}
\setlength{\tabcolsep}{4.75pt}
\footnotesize
\begin{tabular}{lrrrr | rrr | rrr}
\toprule
& & & & &
  \multicolumn{3}{c}{$N = 10$} &
  \multicolumn{3}{c}{$N = 50$} \\
\cmidrule(lr){6-8} \cmidrule(lr){9-11}
Model
  & $\hat\alpha_0/(\hat\alpha_0\!+\!\hat\beta_0)$ & $\hat\alpha_0\!+\!\hat\beta_0$ & $\hat{\gamma}$ & $\hat{\rho}$
  & \textit{5/5} & \textit{5/20} & \textit{20/5}
  & \textit{5/5} & \textit{5/20} & \textit{20/5} \\
\midrule
\textbf{Mistral} & & & & & & & & & & \\
\quad \textit{3B} & $0.45 (0.03)$ & $1.79 (0.15)$ & $1.01 (0.16)$ & $0.48 (0.06)$ & 0.11 & 0.14 & 0.42 & 0.10 & 0.11 & 0.35 \\
\quad \textit{8B} & $0.67 (0.04)$ & $1.60 (0.23)$ & $0.36 (0.05)$ & $0.46 (0.04)$ & 0.16 & 0.18 & 0.49 & 0.16 & 0.16 & 0.67 \\
\quad \textit{14B} & $0.38 (0.03)$ & $1.98 (0.25)$ & $0.40 (0.05)$ & $0.45 (0.05)$ & 0.47 & 0.52 & 0.85 & 0.49 & 0.49 & 1.00 \\
\quad \textit{Large (41B act.)} & $0.07 (0.02)$ & $11.12 (33.75)$ & $1.06 (0.37)$ & $0.16 (0.05)$ & 0.15 & 0.18 & 0.49 & 0.17 & 0.17 & 0.57 \\
\midrule
\textbf{Qwen 3} & & & & & & & & & & \\
\quad \textit{0.6B} & --- & --- & --- & --- & 0.00 & 0.00 & 0.00 & 0.00 & 0.00 & 0.00 \\
\quad \textit{1.7B} & $0.97 (0.00)$ & $107.26 (106.15)$ & $1.34 (0.50)$ & $0.20 (0.04)$ & 0.01 & 0.01 & 0.02 & 0.03 & 0.03 & 0.03 \\
\quad \textit{4B} & $0.97 (0.01)$ & $103.47 (149.12)$ & $1.10 (0.47)$ & $0.09 (0.03)$ & 0.01 & 0.01 & 0.03 & 0.03 & 0.03 & 0.00 \\
\quad \textit{8B} & $0.75 (0.04)$ & $2.13 (0.51)$ & $0.20 (0.03)$ & $0.39 (0.06)$ & 0.04 & 0.04 & 0.20 & 0.04 & 0.03 & 0.07 \\
\quad \textit{14B} & $0.52 (0.04)$ & $1.16 (0.11)$ & $0.16 (0.02)$ & $0.41 (0.05)$ & 0.62 & 0.61 & 1.00 & 0.66 & 0.67 & 1.00 \\
\quad \textit{32B} & $0.91 (0.01)$ & $18.21 (6.34)$ & $0.49 (0.06)$ & $0.31 (0.05)$ & 0.27 & 0.28 & 0.73 & 0.28 & 0.29 & 0.90 \\
\bottomrule
\bottomrule
\end{tabular}
\end{table*}

We estimate the parameters of the generative model from 
Section~\ref{subsec:generative-model} separately for each model using the 
null condition data, with bootstrap standard errors computed by resampling 
both personas and perturbations with replacement ($B = 1000$). Parameters 
are reported in Table~\ref{tab:parameters_power} using the mean $\hat{\alpha}_0 / 
(\hat{\alpha}_0 + \hat{\beta}_0)$ and precision $\hat{\alpha}_0 + \hat{\beta}_0$ 
of the Beta prior. We highlight two observations. First, bootstrap confidence 
intervals for $\hat{\rho}$ exclude zero for all estimable models, confirming 
that $\rho > 0$ is the realistic regime and that the sign test will be 
oversized in practice. Second, models with near-degenerate response 
distributions exhibit large standard errors on the Beta prior precision, reflecting poor 
identifiability of the prior shape parameters in these regimes.

\subsection{Budget Allocation}
\label{subsec:budget}

To provide practical guidance on budget allocation, we simulate power curves 
as a function of total budget $ N \times M \times R$ under eight $N$:$M$:$R$ 
allocation strategies, across four parameter settings spanning the range 
observed in Table~\ref{tab:parameters_power}. Simulation parameters are set 
to the median values estimated from real data. As shown in 
Figure~\ref{fig:budget}, allocating budget toward perturbations consistently 
maximizes power: the $1$:$10$:$1$ strategy dominates across nearly all 
parameter settings. Increasing $\rho$ substantially reduces achievable power 
at fixed budget, while $\gamma$ has a comparatively modest effect. These 
results provide actionable guidance for practitioners designing generative 
surveys: given a fixed query budget, perturbations should be prioritized over 
replicates or additional personas.

\subsection{The Model Matters}
\label{subsec:model-matters}
\begin{figure*}[h]
    \centering
    \includegraphics[width=0.95\linewidth]{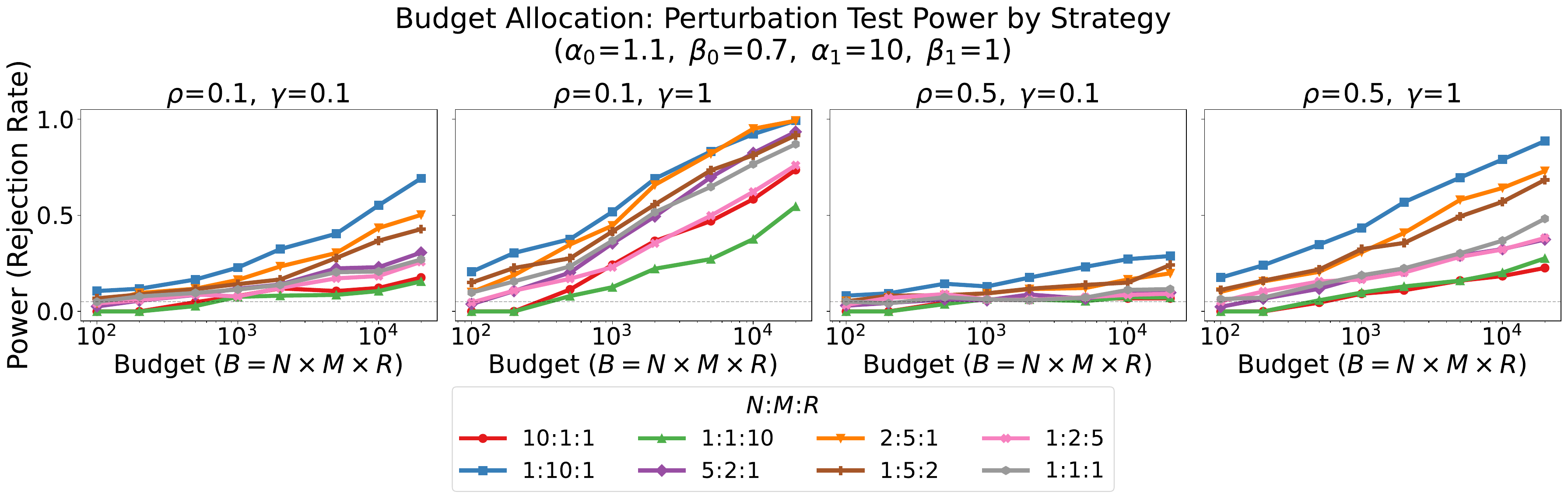}
    \caption{Power of the permutation test as a function of total query budget $N \times M \times R$ under eight budget allocation strategies (N:M:R), across four parameter settings ($\rho \in \{0.1, 0.5\}$, $\gamma \in \{0.1, 1\}$) spanning the range estimated from real data (Table~\ref{tab:parameters_power}). Simulations use median parameter values estimated from real data. Allocating budget toward perturbations ($M$) rather than personas ($N$) or replicates ($R$) consistently yields the highest power. Increasing $\rho$ substantially reduces achievable power at fixed budget, while $\gamma$ has a comparatively modest effect.}
    \label{fig:budget}
\end{figure*}

A natural question in generative surveying is whether the choice of model 
affects downstream conclusions. Figure~\ref{fig:model-matters} shows the 
estimated effect size as a function of active parameters for all ten models 
across the \texttt{Mistral-3} and \texttt{Qwen-3} families. 
We highlight two observations.
\begin{figure}[h]
    \centering
    \includegraphics[width=0.95\linewidth]{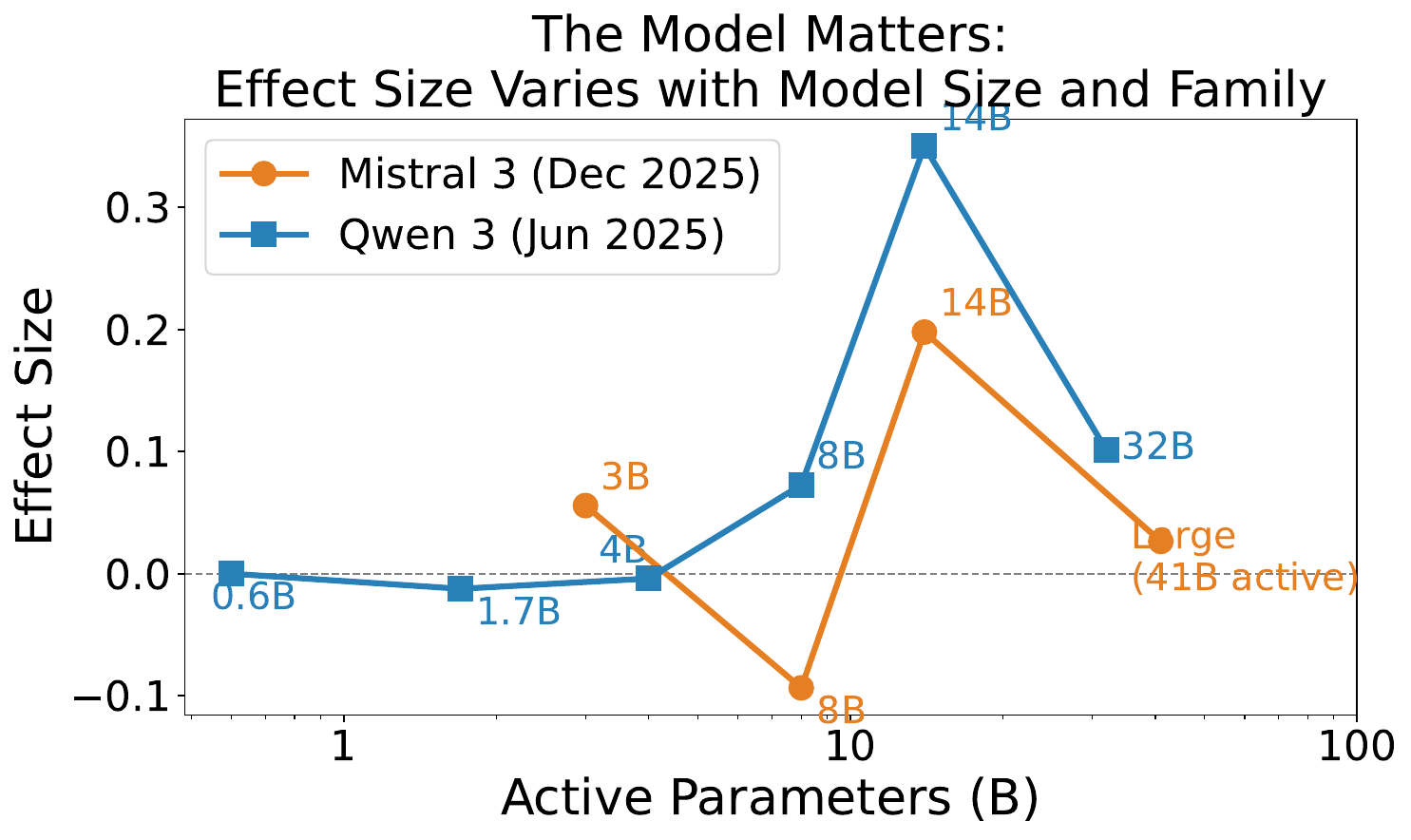}
    \caption{Effect size ($\hat{\beta}_1$) versus number of active parameters for 
models in the \texttt{Mistral-3} and \texttt{Qwen-3} families. Effect size is non-monotonic 
in model scale within both families. Same scale models across families can 
produce estimates of opposite sign (e.g., \texttt{Mistral-8B} and \texttt{Qwen-3-8B} estimate 
effects in opposite directions) meaning that model choice alone can determine the conclusion of a generative survey. 
}
    \label{fig:model-matters}
\end{figure}

First, effect size is non-monotonic in model scale within both families. 
Larger models do not systematically produce larger or more consistent effect 
estimates -- the relationship between scale and effect size is irregular, 
with notable peaks at 14B for both families followed by sharp declines. 

Second, same-scale models across families can produce 
effect estimates of opposite sign. \texttt{Mistral-8B} estimates a negative effect 
($\hat{\beta}_1 < 0$, preferring sneakers) while \texttt{Qwen-3-8B} estimates a 
positive effect ($\hat{\beta}_1 > 0$, preferring boots).  
Together, these observations suggest that results are highly conditioned on the model, not just the model family or size.

We note that our results are conditioned on a specific prompt structure and persona 
construction. 
With that said, given that the direction and magnitude of effect is highly conditioned on the model, we recommend that practitioners report the 
model used in any generative survey, treat conclusions as model-conditional, 
and replicate findings across multiple models before acting on 
results when possible.

\section{Discussion}
\label{sec:discussion}

We showed that standard paired hypothesis tests are invalid for generative 
surveying when perturbations induce cross-persona correlation ($\rho > 0$), 
introduced a simple statistical model that includes cross-persona correlation, proposed a permutation test that is valid for all $\rho \geq 0$, and 
validated the framework empirically across ten models in the \texttt{Mistral-3} and 
\texttt{Qwen-3} families. 
Our parameter estimates confirmed that $\rho > 0$ is the 
most typical regime in real data and showed that perturbations should be prioritized over replicates to maximize power under a fixed budget. 
We additionally found that effect size and direction are sensitive to model choice even within the same family, underscoring the need for careful reporting standards in generative surveying.

\subsection*{Limitations and Future Work}

\noindent\textbf{Binary responses.} We focus on binary responses for simplicity. 
The permutation test extends naturally to any response structure for which a perturbation-level summary statistic can be computed (e.g., mean ratings) and the generative model can be adapted accordingly. 
Extending the empirical and theoretical results to non-binary settings is an important direction for 
future work.

\noindent\textbf{Prompt structure and persona construction.} The estimated 
parameters and effect sizes reported in Section~\ref{sec:experiments} are 
conditioned on a specific prompt structure and persona construction. 
Different system prompts, persona attribute sets, or persona generation 
procedures may produce different values of $\rho$, $\gamma$, and $\beta_1$. 
As shown by the effect of model size within a model family, conclusions should not be generalized across substantially different designs without further validation or  
re-estimation.

\noindent\textbf{Joint versus separate A/B presentation.} Our framework assumes each message is presented to each persona independently.
A natural 
alternative is to 
present message $ A $ and message $ B $ within a single prompt. 
However, LLMs are sensitive to the ordering of options within a prompt 
\citep{ness2024medfuzz}, requiring both orderings to be presented to control for the presentation-order confound -- doubling the number of API calls with no 
reduction in total cost relative to independent presentation. 
The separate-presentation design additionally 
supports multi-way comparisons efficiently: responses to message $A$ 
collected under the $N \times M \times R$ design can be reused for any 
subsequent comparison -- whether A vs.\ B or A vs.\ C -- without additional API calls, so total cost scales with the 
number of messages rather than the number of pairs.

\noindent\textbf{Recommendations.} Based on our results, 
we offer three recommendations. First, always use $M > 1$ semantically 
equivalent perturbations and apply the permutation test rather than the 
sign test or Wilcoxon signed-rank test. Second, allocate budget toward 
perturbations ($M$) rather than replicates ($R$). Third, treat conclusions as model-conditional. 
\label{introduction}

\label{method}


\label{experiments}

\clearpage

\bibliography{biblio}

\clearpage

\appendix
\onecolumn
\appendix

\section{Models \& Perturbation and Persona Examples}
\label{app:examples}

\subsection{Models and Generation Parameters}
\label{app:models}

Mistral models are queried via the Mistral Batch API with pinned version IDs 
for reproducibility. Qwen 3 models are queried via the DashScope API with 
\texttt{enable\_thinking=false} to disable chain-of-thought reasoning. All 
models are queried with temperature $1.0$ and \texttt{max\_tokens=1}, 
yielding $N \times (M^A + M^B) \times R = 100 \times 75 \times 20 = 
150{,}000$ requests per model and $1{,}500{,}000$ requests in total.

\begin{table}[h]
\centering
\caption{Mistral 3 models queried via the Mistral Batch API.}
\label{tab:mistral-models}
\small
\begin{tabular}{llr}
\toprule
Model & API ID & Parameters \\
\midrule
Ministral 3B  & \texttt{ministral-3b-2512}    & 3B \\
Ministral 8B  & \texttt{ministral-8b-2512}    & 8B \\
Ministral 14B & \texttt{ministral-14b-2512}   & 14B \\
Mistral Large & \texttt{mistral-large-2512}   & 41B active / 675B total (MoE) \\
\bottomrule
\end{tabular}
\end{table}

\begin{table}[h]
\centering
\caption{Qwen 3 models queried via the DashScope API with 
\texttt{enable\_thinking=false}.}
\label{tab:qwen-models}
\small
\begin{tabular}{llr}
\toprule
Model & API ID & Parameters \\
\midrule
Qwen3 0.6B & \texttt{qwen3-0.6b}  & 0.6B \\
Qwen3 1.7B & \texttt{qwen3-1.7b}  & 1.7B \\
Qwen3 4B   & \texttt{qwen3-4b}    & 4B \\
Qwen3 8B   & \texttt{qwen3-8b}    & 8B \\
Qwen3 14B  & \texttt{qwen3-14b}   & 14B \\
Qwen3 32B  & \texttt{qwen3-32b}   & 32B \\
\bottomrule
\end{tabular}
\end{table}

\subsection{Persona Examples}
\label{app:personas}

We generate $N = 100$ synthetic shopper personas via 
\texttt{mistral-small-latest}. Each persona has five demographic attributes: 
age, gender, occupation, income bracket, and marital status. The first 100 
personas are used in all experiments. Table~\ref{tab:personas} shows a 
representative sample.

\begin{table}[h]
\centering
\caption{Representative sample of synthetic shopper personas.}
\label{tab:personas}
\small
\begin{tabular}{rrlllll}
\toprule
ID & Age & Gender & Occupation & Income & Marital Status \\
\midrule
0  & 25 & male       & Software Engineer    & high   & single   \\
1  & 32 & female     & Nurse                & medium & married  \\
2  & 45 & non-binary & Graphic Designer     & medium & divorced \\
3  & 60 & male       & Retired Teacher      & low    & widowed  \\
4  & 28 & female     & Marketing Manager    & high   & single   \\
5  & 55 & male       & Construction Worker  & medium & married  \\
6  & 37 & non-binary & Social Worker        & low    & single   \\
7  & 42 & female     & Lawyer               & high   & divorced \\
8  & 65 & male       & Retired Mechanic     & low    & widowed  \\
9  & 30 & female     & Chef                 & medium & single   \\
\bottomrule
\end{tabular}
\end{table}

\subsection{Sneaker Perturbations ($M = 50$)}
\label{app:sneaker-perturbations}

All 50 semantically equivalent paraphrases of the sneaker purchase-intent 
query, generated and validated via \texttt{mistral-small-latest}:

\small
\begin{enumerate}[itemsep=0pt, topsep=0pt, parsep=1pt]
\item I'd like to buy a pair of sneakers.
\item I'm looking to purchase some new trainers.
\item I want to shop for a pair of athletic casual shoes.
\item I'm considering buying sneakers.
\item I'm in the market for a new pair of sneakers.
\item I'm thinking about purchasing some trainers.
\item I'd like to get myself a pair of sneakers.
\item I'm interested in buying athletic casual shoes.
\item I want to buy some new sneakers.
\item I'm planning to purchase a pair of trainers.
\item I'm looking to get a pair of sneakers.
\item I'd like to invest in some new athletic casual shoes.
\item I'm considering buying a pair of sneakers.
\item I want to shop for trainers.
\item I'm interested in purchasing sneakers.
\item I'd like to grab a pair of athletic casual shoes.
\item I'm thinking about getting some new sneakers.
\item I'm in the mood to buy trainers.
\item I'm looking to acquire a pair of sneakers.
\item I want to pick up some new athletic casual shoes.
\item I'm considering shopping for sneakers.
\item I'd like to obtain a pair of trainers.
\item I'm planning to get myself some new sneakers.
\item I want to purchase a pair of trainers.
\item I'm looking to purchase some athletic shoes.
\item Can you show me some sneakers to buy?
\item I want to shop for a new pair of trainers.
\item I'm interested in buying some casual athletic shoes.
\item Do you have sneakers available for purchase?
\item I'm thinking about getting a pair of sneakers.
\item I'd like to explore options for buying sneakers.
\item I want to find a pair of sneakers to buy.
\item I'm in the market for some new trainers.
\item I'm considering purchasing a pair of sneakers.
\item I'd like to buy some comfortable athletic shoes.
\item I'm looking to add sneakers to my collection.
\item I want to check out sneakers to purchase.
\item I'm interested in buying a pair of running shoes.
\item I'd like to shop for some stylish sneakers.
\item I'm hoping to find sneakers to buy soon.
\item I want to purchase a pair of casual sneakers.
\item I'm considering buying some athletic footwear.
\item I'd like to explore sneaker options for purchase.
\item I want to buy some new trainers.
\item I'm interested in purchasing some sneakers.
\item I'd like to add a pair of sneakers to my wardrobe.
\item I'm considering buying some casual athletic shoes.
\item I'd like to buy a new pair of sneakers.
\item Could you show me some trainers that are available for purchase?
\item I'm looking to shop for a pair of athletic casual shoes.
\end{enumerate}

\normalsize
\subsection{Boot Perturbations ($M = 25$)}
\label{app:boot-perturbations}

All 25 semantically equivalent paraphrases of the boot purchase-intent 
query, generated and validated via \texttt{mistral-small-latest}:

\small
\begin{enumerate}[itemsep=0pt, topsep=0pt, parsep=1pt]
\item I'd like to buy a pair of boots.
\item I'm looking to purchase some boots.
\item I want to buy boots.
\item I'm in the market for a new pair of boots.
\item I'd like to shop for boots.
\item I'm interested in getting boots.
\item I need to buy boots.
\item I'm considering purchasing boots.
\item I'm planning to buy a pair of boots.
\item I'm thinking about buying boots.
\item I'm hoping to purchase boots soon.
\item I'm looking for boots to buy.
\item I'm ready to buy boots.
\item I'd like to find boots to purchase.
\item I'm shopping for boots.
\item I'm aiming to buy boots.
\item I'm set on purchasing boots.
\item I'm eager to buy a pair of boots.
\item I'm on the hunt for boots to buy.
\item I'm keen to purchase boots.
\item I'd like to buy some boots.
\item I'm looking for a pair of boots to purchase.
\item I'm planning to shop for boots soon.
\item I want to get myself a pair of boots.
\item I'm thinking about buying boots, specifically a pair.
\end{enumerate}

\normalsize

\section{Individual Model Validity}
\label{app:validity}

Table~\ref{tab:validity} reports the null rejection rate at $\alpha = 0.05$ 
for the sign test and permutation test for each individual model under the 
null condition (sneakers vs.\ sneakers split), across six $(N, M, R)$ 
configurations. Each cell is based on 200 random sub-samples. The sign test 
is oversized for every model with a non-degenerate response distribution. 
The permutation test controls size near $\alpha = 0.05$ across all models 
and configurations. Sign test inflation grows substantially with $N$: at 
$N = 50$, rejection rates reach $0.5$--$0.7$ for most models. Mistral 
Large at $N = 10$ is an exception -- the near-zero base rate 
($\hat{p}_i \approx 0.046$) produces near-zero variance in the response 
distribution, suppressing the sign test statistic regardless of the 
perturbation structure. Qwen 3 0.6B is degenerate (always responds 
``Yes'') and produces zero rejection rates for both tests.
\begin{table}[h]
\centering
\caption{Null rejection rates at $\alpha = 0.05$ for the sign test / permutation 
test under the null condition (sneakers vs.\ sneakers split), shown as 
Sign / Perm.\ for each $(N, M, R)$. Each cell is based on 200 random sub-samples.}
\label{tab:validity}
\small
\setlength{\tabcolsep}{4pt}
\begin{tabular}{l cccccc}
\toprule
& \multicolumn{3}{c}{$N = 10$} & \multicolumn{3}{c}{$N = 50$} \\
\cmidrule(lr){2-4} \cmidrule(lr){5-7}
Model & $(5,5)$ & $(5,10)$ & $(10,5)$ & $(5,5)$ & $(5,10)$ & $(10,5)$ \\
\midrule
\textbf{Mistral} & & & & & & \\
\quad\textit{3B}    & .18/.05 & .26/.04 & .21/.05 & .54/.06 & .62/.05 & .56/.05 \\
\quad\textit{8B}    & .27/.01 & .29/.02 & .20/.08 & .65/.02 & .68/.04 & .56/.08 \\
\quad\textit{14B}   & .20/.04 & .26/.04 & .24/.06 & .59/.07 & .63/.07 & .59/.05 \\
\quad\textit{Large} & .02/.02 & .04/.03 & .03/.04 & .44/.05 & .46/.06 & .53/.07 \\
\midrule
\textbf{Qwen 3} & & & & & & \\
\quad\textit{0.6B}  & .00/.00 & .00/.00 & .00/.00 & .00/.00 & .00/.00 & .00/.00 \\
\quad\textit{1.7B}  & .01/.03 & .02/.03 & .02/.03 & .59/.05 & .63/.05 & .58/.04 \\
\quad\textit{4B}    & .00/.01 & .00/.01 & .00/.01 & .35/.02 & .35/.02 & .31/.03 \\
\quad\textit{8B}    & .28/.02 & .28/.02 & .30/.02 & .71/.03 & .72/.03 & .71/.02 \\
\quad\textit{14B}   & .32/.05 & .32/.07 & .32/.05 & .68/.06 & .70/.07 & .71/.05 \\
\quad\textit{32B}   & .21/.02 & .20/.04 & .24/.03 & .56/.04 & .58/.05 & .60/.04 \\
\bottomrule
\end{tabular}
\end{table}

\section{Parameter Estimation Procedure}
\label{app:estimation}

We estimate the parameters $(\alpha_0, \beta_0, \gamma, \rho)$ of the 
generative model separately for each model using the null condition 
(sneakers vs.\ sneakers) data. The procedure is a hybrid of maximum 
likelihood estimation and method of moments on the logit scale. Let $ \sigma = 1 / \sqrt{\gamma} $.

\paragraph{Step 1: Persona base rates.} For each persona $i$, compute 
$\hat{p}_i = \bar{y}_{i\cdot\cdot}$, the mean response across all 
perturbations and replicates. 

\paragraph{Step 2: Beta prior ($\alpha_0, \beta_0$).} Fit 
$\text{Beta}(\alpha_0, \beta_0)$ to the distribution of $\{\hat{p}_i\}$ 
across personas via maximum likelihood, using Nelder--Mead optimization 
of the negative log-likelihood.

\paragraph{Step 3: Logit-scale residuals.} For each (persona, perturbation) 
pair, compute the perturbation-level response rate $\hat{p}_{ij} = 
\bar{y}_{ij\cdot}$ (mean over replicates) and the logit-scale residual:
\begin{equation*}
    r_{ij} = \text{logit}(\hat{p}_{ij}) - \text{logit}(\hat{p}_i).
\end{equation*}

\paragraph{Step 4: Total variance ($\gamma$).} Compute 
$\hat{\sigma}^2 = \text{Var}(r_{ij})$ across all valid (persona, 
perturbation) cells, giving $\hat{\gamma} = 1/\hat{\sigma}^2$.

\paragraph{Step 5: Shared variance and intraclass correlation ($\rho$).} 
Compute perturbation means $\bar{r}_{\cdot j} = \frac{1}{N}\sum_i r_{ij}$ 
and apply the finite-$N$ bias correction:
\begin{equation*}
    \hat{\sigma}^2_u = \frac{N \cdot \text{Var}(\bar{r}_{\cdot j}) - 
    \hat{\sigma}^2}{N - 1},
\end{equation*}
clamped to $[0, \hat{\sigma}^2]$. The intraclass correlation is then 
$\hat{\rho} = \hat{\sigma}^2_u / \hat{\sigma}^2$.

\end{document}